\theoremstyle{plain}
\newtheorem{theorem}{Theorem}[section]
\newtheorem*{theorem*}{Theorem}
\newtheorem*{definition*}{Definition}
\newtheorem{lemma}[theorem]{Lemma}
\newtheorem*{lemma*}{Lemma}
\newtheorem{corollary}[theorem]{Corollary}
\newtheorem{remark}[theorem]{Remark}
\newcommand{\be}{\begin{eqnarray}}
\newcommand{\ee}{\end{eqnarray}}
\newcommand{\ba}{\begin{array}}
\newcommand{\ea}{\end{array}}
\newcommand{\bmat}{\left(\begin{array}}
\newcommand{\emat}{\end{array}\right)}
\newcommand{\no}{\nonumber}
\title{Upper bound on the second derivative of the quenched pressure in spin-glass models: weak Griffiths second inequality}
\author{ Manaka Okuyama$^1$\thanks{manaka.okuyama.d2@tohoku.ac.jp} and Masayuki Ohzeki$^{1,2,3}$}
\begin{document}
\maketitle

\section{Introduction}
The Griffiths inequalities, which characterize that the correlation of ferromagnetic models is always positive, are very important and indispensable in the rigorous analysis of ferromagnetic models.
Towards rigorous analyses of finite-dimensional spin-glass models, there are some previous studies~\cite{CG,CL,MNC,Kitatani,CS,CUV,CG2} which aim to establish the counterpart of the Griffiths inequalities in spin-glass models.
These attempts have been partially successful, and, when the distribution function of the interactions is symmetric, the counterpart of the Griffiths first inequality has been obtained in spin-glass models~\cite{CG,CL}.
Besides, it was shown that the ferromagnetic counterpart of the Griffiths second inequality~\cite{MNC,Kitatani} holds on the Nishimori line~\cite{Nishimori}.
However, no counterpart of the Griffiths second inequality has been established for other parameter regions~\cite{CUV}.
Rigorous analyses based on the concept of correlation inequalities have not been sufficiently advanced~\cite{CS,CG2}.

On the other hand, by focusing only on the distribution of a single interaction among all of the interactions, the previous studies~\cite{KNA,OO} showed that the quenched average of the local energy for Ising models with quenched randomness is always larger than or equal to one in the absence of all the other interactions. 
Although this is a non-trivial result that holds regardless of any other interaction, an extension to the case with multi variables was not apparent.
In this study, we give another derivation of their result~\cite{KNA,OO} and extend it to the case with multi variables.
Then, we obtain some correlation inequalities for the Ising models with quenched randomness.
Moreover, combining the acquired inequality and our previous work~\cite{OO2} for symmetric distributions, we find that there is a simple upper bound on the second derivative of the quenched pressure with respect to the strength of the randomness.
This bound can be regarded as a weak result of the counterpart of the Griffiths second inequality in spin-glass models for general symmetric distributions.

The organization of the paper is as follows.
In Sec. II, we define the model and explain the counterpart of the Griffiths inequalities in spin-glass models.
In Sec. III, we prove some inequalities for quenched averages in preparation for the next section.
Section IV is devoted to obtaining some correlation inequalities for spin glass, which is a systematic extension of the previous study~\cite{KNA,OO}.
Furthermore, we provide a non-trivial upper bound on the second derivative of the quenched pressure with respect to the strength of the randomness.
Finally, our conclusion is given in Sec. V.

\section{Ising model with quenched randomness and counterpart of Griffiths inequalities}

Following Ref. \cite{KNA}, we consider a generic form of the Ising model,
\be
H&=&- \sum_{A \subset{V}} \lambda_A  J_{A} \sigma_A ,
\\
\sigma_A &\equiv& \prod_{i\in A} \sigma_i ,
\ee
where $V$ is the set of sites, the sum over $A$ is over all the subsets of $V$ in which interactions exist, and the lattice structure adopts any form.
The probability distribution of a random interaction $J_A$ is represented as $P_A (J_A)$.
The probability distributions can be generally different from each other, i.e., $P_A (x)\neq P_{B} (x)$, and are also allowed to present no randomness, i.e., $P_A(J_A)=\delta(J-J_A)$.
The parameter $\lambda_A$ plays a role in controlling the strength of the randomness.

The partition function $Z_{\{J_{A}\}}$ and correlation function $\langle\sigma_B \rangle_{\{J_{A}\}}$ for a set of fixed interactions, $\{J_{A}\}$, are given by
\be
Z_{\{J_{A}\}}&=&\Tr \exp\left( \beta \sum_{A \subset{V}} \lambda_A  J_{A} \sigma_A \right),
\\
\langle\sigma_B \rangle_{\{J_{A}\}}&=& \frac{ \Tr \sigma_B \exp\left(\beta \sum_{A \subset{V}} \lambda_A  J_{A} \sigma_A \right)}{Z_{\{J_{A}\}}} .
\ee
The configurational average over the distribution of the interactions is written as 
\be
\mathbb{E}\left[g(\{J_{A}\})  \right] =  \left(\prod_{A \subset{V}} \int_{-\infty}^\infty dJ_A P_A(J_A) \right) g(\{J_{A}\}).
\ee
For example, the quenched average of the correlation function is obtained as
\be
\mathbb{E}\left[ \langle\sigma_B \rangle_{\{J_{A}\}} \right]&=&  \left(\prod_{A \subset{V}} \int_{-\infty}^\infty dJ_A P_A(J_A) \right)   \frac{ \Tr \sigma_B \exp\left(\beta \sum_{A \subset{V}}  \lambda_A J_{A} \sigma_A \right)}{Z_{\{J_{A}\}}} .
\ee
In addition, the quenched pressure $P$ is defined as
\be
P&=&\mathbb{E}\left[\log Z_{\{J_{A}\}}\right].
\ee

Recent studies showed~\cite{CG,CL}, when the probability distribution of a random interaction $J_B$ is symmetric, $P_B (J_B)=P_B (-J_B)$, the first derivative of  the quenched pressure with respect to the strength of the randomness is always positive:
\be
\frac{1}{\beta}\frac{\partial P}{\partial \lambda_B}&=&\mathbb{E}\left[ J_B\langle\sigma_B \rangle_{\{J_{A}\}} \right] \ge0,
\ee
which is regarded as the counterpart of the Griffiths first inequality in spin-glass models.

Next, we consider the counterpart of the Griffiths second inequality in spin-glass models.
Previous studies~\cite{CL,MNC,Kitatani,CUV} have focused on the second derivative of the quenched pressure with respect to the strength of the randomness,
\be
\frac{1}{\beta^2}\frac{\partial^2 P}{\partial \lambda_B \partial \lambda_C}&=&\mathbb{E}\left[ J_BJ_C (\langle\sigma_B \sigma_C \rangle_{\{J_{A}\}} -\langle\sigma_B  \rangle_{\{J_{A}\}} \langle \sigma_C \rangle_{\{J_{A}\}} ) \right] \label{second-deri}.
\ee
In ferromagnetic models, the Griffiths second inequality means that the second derivative of the pressure with respect to the ferromagnetic interactions is always positive.
Fortunately, on the Nishimori line, a similar relation also holds in spin-glass models and Eq.(\ref{second-deri}) is always positive~\cite{MNC,Kitatani}, which is the ferromagnetic counterpart of the Griffiths second inequality on the Nishimori line in spin-glass models.

In the case of symmetric distributions, however, studies on the counterpart of the Griffiths second inequality have not satisfactorily progressed.
When $P_B (J_B)$ and $P_C (J_C)$ follow the symmetric Gaussian distributions with the variance $\Lambda_B^2$ and $\Lambda_C^2$, respectively, by integration by parts, Eq. (\ref{second-deri}) is deformed to
\be
\frac{1}{\beta^2}\frac{\partial^2 P}{\partial \lambda_B \partial \lambda_C}
&=&\frac{\Lambda_B^2}{\beta} \frac{\partial}{\partial \lambda_C} \mathbb{E}\left[ -\langle\sigma_B \rangle_{\{J_{A}\}}^2 \right] .
\ee
This means that, if Eq. (\ref{second-deri}) is negative for $\sigma_B\neq \sigma_C$, the overlap expectation $\mathbb{E}\left[ \langle\sigma_B \rangle_{\{J_{A}\}}^2 \right]$ is monotonic non decreasing with the system size~\cite{CL}.
The overlap expectation $\mathbb{E}\left[ \langle\sigma_B \rangle_{\{J_{A}\}}^2 \right]$ tends to increase with increasing randomness.
Then, as the counterpart of the Griffiths second inequality, it is expected that Eq. (\ref{second-deri}) is always negative for general symmetric distributions; however, an explicit counterexample exists~\cite{CUV} and it was shown that Eq. (\ref{second-deri}) takes both positive and negative values depending on the details of the model.
Thus, the counterpart of the Griffiths second inequality has not been established even for symmetric distributions in spin-glass models and it is an important problem to investigate when Eq. (\ref{second-deri}) is negative.

On the other hand, it was shown that, for $P_B (J_B)=P_B (-J_B)$, the first derivative of  the quenched pressure with respect to the strength of the randomness has the following upper bound~\cite{KNA},
\be
\frac{1}{\beta}\frac{\partial P}{\partial \lambda_B}=\mathbb{E}\left[ J_B\langle\sigma_B \rangle_{\{J_{A}\}} \right] \le \mathbb{E}\left[ J_B \tanh(\beta \lambda_B J_{B}) \right]. \label{Kitatani-iq}
\ee
We note that Eq. (\ref{Kitatani-iq}) is independent of any other interaction, which is a non-trivial result.
The proof of Eq. (\ref{Kitatani-iq}) was obtained by focusing only on the distribution of a single interaction among all of the interactions; however, it is not clear how to extend it to the case with multi variables.
In the following sections, we give another proof of Eq. (\ref{Kitatani-iq}) and extend it to the case with multi variables.
Then, we obtain some correlation inequalities in the Ising models with quenched randomness.
Furthermore, in Sec. IV, we show that Eq. (\ref{second-deri}) has a non-trivial positive upper bound.

\section{Inequalities for expectations by inequality of arithmetic and geometric means
}
In this section, we prove three inequalities for expectations, which play an important role in the next section.

We consider the case that the distribution functions of $J_B$ and $J_C$ satisfy the following relations:
\be
P_B(-J_B)&=&\exp(-2\beta_{\text{NL}, B} J_B) P_B(J_B), \label{ferro-bias}
\\
P_C(-J_C)&=&\exp(-2\beta_{\text{NL}, C} J_C) P_C(J_C),\label{ferro-bias2}
\ee
where $\beta_{\text{NL}, B}$ and $\beta_{\text{NL}, C}$ are allowed to be any real values.
For example, in the case of the Gaussian distribution
\be
P_B(J_B)&=&\frac{1}{\sqrt{2\pi \mu^2}}\exp\left(-\frac{(J_B-J_0)^2}{2\mu^2} \right),
\ee 
and the binary distribution
\be
P_B(J_B)&=&p\delta(J_B-J)+(1-p)\delta(J_B+J) ,
\ee
$\beta_{\text{NL},B}$ is given as follows, respectively,
\be
\beta_{\text{NL}, B}&=& \frac{J_0}{\mu^2} ,
\\
\beta_{\text{NL}, B}&=&\frac{p}{1-p} .
\ee
We note that we do not impose any constraint on all the other interactions than $J_B$ and $J_C$.
In addition, for $f(\{J_A\})$, when we focus on the interaction $J_B$, we denote $f(\{J_A\})$ as $f(J_B, \{J_A\}\setminus J_B)$.
Similarly, when we are interested in $J_B$ and $J_C$, we represent $f(\{J_A\})$ as $f(J_B,J_C, \{J_A \}\setminus \{J_B,J_C\})$

Our first result in this section is as follows.
\begin{lemma}\label{lm1}
We assume that $X(\{J_A\})$ satisfies 
\be
&&X(\{J_A\}) > 0,
\\
&&X(-J_B, \{J_A \}\setminus J_B)=\frac{1}{X(J_B, \{J_A \}\setminus J_B)} .
\ee
Then, for any function $f(\{J_A\})$ satisfying
\be
&&f(\{J_A\}) \ge 0,
\\
&&f(J_B,\{J_A \}\setminus J_B)=f(-J_B,\{J_A \}\setminus J_B),
\ee
the following inequality holds
\be
\mathbb{E}\left[f(\{J_A\})X(\{J_A\}) \right] \ge \mathbb{E}\left[f(\{J_A\})  \exp(-\beta_{\text{NL}, B} J_B)  \right].
\ee
\end{lemma}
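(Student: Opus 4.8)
The plan is to exploit the symmetry of $f$ in $J_B$ to symmetrize the expectation over the sign of $J_B$, and then apply the AM–GM inequality pointwise. First I would isolate the $J_B$-integral: since $f$ does not depend on the other interactions in any special way, write
\be
\mathbb{E}\left[f(\{J_A\})X(\{J_A\}) \right] = \mathbb{E}'\left[\int_{-\infty}^\infty dJ_B\, P_B(J_B)\, f(\{J_A\}) X(\{J_A\}) \right],
\ee
where $\mathbb{E}'$ denotes the average over all interactions other than $J_B$. Inside the $J_B$-integral I would split the domain into $J_B>0$ and $J_B<0$ and, in the latter piece, substitute $J_B\mapsto -J_B$. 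Using the hypothesis $X(-J_B,\cdot)=1/X(J_B,\cdot)$, the relation $f(-J_B,\cdot)=f(J_B,\cdot)$, and the defining relation (\ref{ferro-bias}) for $P_B$, the integrand over $J_B>0$ becomes
\be
P_B(J_B)\, f(\{J_A\}) \left( X(\{J_A\}) + \exp(-2\beta_{\text{NL}, B} J_B)\, \frac{1}{X(\{J_A\})} \right).
\ee

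The key step is then the inequality of arithmetic and geometric means applied to the two terms in the parenthesis: for $a,b>0$ one has $a+b \ge 2\sqrt{ab}$, so
\be
X(\{J_A\}) + \frac{\exp(-2\beta_{\text{NL}, B} J_B)}{X(\{J_A\})} \ge 2\exp(-\beta_{\text{NL}, B} J_B),
\ee
using $X>0$. Multiplying by $P_B(J_B)f(\{J_A\})\ge 0$ and integrating over $J_B>0$ gives a lower bound of $2\int_0^\infty dJ_B\, P_B(J_B) f(\{J_A\})\exp(-\beta_{\text{NL}, B} J_B)$. To finish, I would run the domain-splitting argument in reverse on the target quantity $\mathbb{E}[f(\{J_A\})\exp(-\beta_{\text{NL}, B} J_B)]$: splitting its $J_B$-integral at zero and substituting $J_B\mapsto-J_B$ on the negative part, the relation (\ref{ferro-bias}) converts $P_B(-J_B)\exp(\beta_{\text{NL}, B} J_B)$ into $P_B(J_B)\exp(-\beta_{\text{NL}, B} J_B)$, so that this expectation equals exactly $2\int_0^\infty dJ_B\, P_B(J_B) f(\{J_A\})\exp(-\beta_{\text{NL}, B} J_B)$ after averaging over the remaining interactions. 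Matching the two expressions yields the claim.

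The only place requiring care is the bookkeeping of the measure under $J_B\mapsto -J_B$: one must verify that $P_B(-J_B)\,dJ_B$ paired against the reflected integrand produces precisely the factor $\exp(-2\beta_{\text{NL}, B} J_B)P_B(J_B)$ predicted by (\ref{ferro-bias}), with no stray sign or Jacobian, and that the $J_B=0$ boundary contributes nothing. I expect this to be the main (though routine) obstacle; the AM–GM step itself is immediate once the integrand is in the symmetrized form. Everything else — pulling out the $\mathbb{E}'$ average, nonnegativity of $f$ and $P_B$ — is straightforward since no assumption is placed on the other interactions.
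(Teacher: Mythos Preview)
Your proposal is correct and follows essentially the same route as the paper: split the $J_B$-integral at zero, use (\ref{ferro-bias}) together with the symmetry of $f$ and the reciprocity of $X$ to symmetrize the integrand, apply AM--GM pointwise, and then undo the splitting on the right-hand side. The only cosmetic difference is that you substitute $X(-J_B)=1/X(J_B)$ before applying AM--GM, whereas the paper applies AM--GM first and simplifies $\sqrt{X(J_B)X(-J_B)}=1$ afterward.
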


\begin{proof}
By dividing the integration interval of $J_B$ and summing up them, we obtain
\be
&&\mathbb{E}\left[ f(\{J_A\})X(\{J_A\}) \right]
\no\\
&=&\int_0^\infty dJ_BP_B(J_B)  \mathbb{E}\left[ f(\{J_A\})\left( X(J_B, \{J_A \}\setminus J_B)+X(-J_B, \{J_A \}\setminus J_B)e^{-2\beta_{\text{NL}, B} J_B}  \right) \right]' 
\no\\
&\ge&\int_0^\infty dJ_BP_B(J_B) \mathbb{E}\left[   2 f(\{J_A\})\sqrt{X(J_B, \{J_A \}\setminus J_B)X(-J_B, \{J_A \}\setminus J_B) e^{-2\beta_{\text{NL}, B} J_B}  }  \right]' 
\no\\
&=&\int_0^\infty dJ_BP_B(J_B) \mathbb{E}\left[   2 f(\{J_A\})e^{-\beta_{\text{NL}, B} J_B}    \right]' 
\no\\
&=&\mathbb{E}\left[f(\{J_A\}) e^{-\beta_{\text{NL}, B} J_B}  \right],
\ee
where $\mathbb{E}[\cdots]'$ denotes the configurational average over the randomness of the interactions other than $J_B$, and we used the inequality of arithmetic and geometric means.
Thus, we prove Lemma \ref{lm1}.
\end{proof}

Using Lemma \ref{lm1}, we give another derivation of the inequality for the local energy\cite{KNA,OO,OO2},
\be
\mathbb{E}\left[ - f(\{J_A\})J_B \langle \sigma_B \rangle_{\{J_{A}\}}  \right] \ge \mathbb{E}\left[-f(\{J_A\})J_B \left(\tanh(\beta \lambda_B J_{B})  + \frac{1-e^{-\beta_{\text{NL}, B} J_B}}{\sinh(2\beta \lambda_B J_{B})} \right) \right], \label{pre-kitatani}
\ee
where $f(\{J_A\})$ satisfies
\be
f(\{J_A\})&\ge&0,
\\
f(J_B,\{J_A \}\setminus J_B)&=&f(-J_B,\{J_A \}\setminus J_B).
\ee
We note that, for $\beta_{\text{NL}, B}=0$ and $f(\{J_A\})=1$, this inequality coincides with Eq. (\ref{Kitatani-iq}).
\begin{proof}
For simplicity, we denote $Z(J_B)$ as 
\be
Z(J_B)&=&\Tr \exp\left( \beta \sum_{A \subset{V}\setminus B} \lambda_A  J_{A} \sigma_A +\beta \lambda_B J_B \sigma_B \right),
\ee
We note that $Z(J_B)=Z_{\{J_{A}\}}$ but $Z(-J_B)\neq Z_{\{J_{A}\}}$.
Then, putting $X(\{J_A\})$ as $Z(-J_B)/Z(J_B)=\cosh(2\beta \lambda_B J_{B}) -  \langle\sigma_B \rangle_{\{J_{A}\}}  \sinh(2\beta \lambda_B J_{B})$ in Lemma. \ref{lm1}, we obtain
\be
\mathbb{E}\left[f(\{J_A\}) \left(\cosh(2\beta \lambda_B J_{B}) -  \langle\sigma_B \rangle_{\{J_{A}\}}  \sinh(2\beta \lambda_B J_{B})   \right)\right] \ge \mathbb{E}\left[f(\{J_A\}) e^{-\beta_{\text{NL}, B} J_B}\right]  .
\ee
Furthermore, replacing $f(\{J_A\})$ by $f(\{J_A\}){J_B}/{\sinh(2\beta J_{B})}$, we prove Eq. (\ref{pre-kitatani}).
\end{proof}

Our second result is a extension of Lemma. \ref{lm1} to two-variable case.

\begin{lemma}\label{lm2}
We assume that $X(\{J_A\})$ satisfies 
\be
&&X(\{J_A\}) >0,
\\
&&X(-J_B,-J_C, \{J_A \}\setminus \{J_B,J_C\})=\frac{1}{X(J_B,J_C, \{J_A \}\setminus \{J_B,J_C\})} .
\ee
Then, for any function $f(\{J_A\})$ satisfying
\be
f(\{J_A\})&\ge&0,
\\
f(J_B,J_C,\{J_A \}\setminus \{J_B,J_C\}))&=&f(-J_B,J_C,\{J_A \}\setminus \{J_B,J_C\}))
\no\\
&=&f(J_B,-J_C,\{J_A \}\setminus \{J_B,J_C\}))
\no\\
&=&f(-J_B,-J_C,\{J_A \}\setminus \{J_B,J_C\})),
\ee
the following inequality holds
\be
\mathbb{E}\left[f(\{J_A\})X(\{J_A\}) \right] \ge \mathbb{E}\left[f(\{J_A\})  \exp\left(-\beta_{\text{NL}, B} J_B\right)\exp \left(-\beta_{\text{NL}, C} J_C\right)  \right].
\ee
\end{lemma}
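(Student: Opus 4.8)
The plan is to run the proof of Lemma~\ref{lm1} with the integration ranges of \emph{both} $J_B$ and $J_C$ split at the origin. Writing $X(J_B,J_C)$ as shorthand for $X(J_B,J_C,\{J_A\}\setminus\{J_B,J_C\})$ and $\mathbb{E}[\cdots]'$ for the configurational average over all interactions other than $J_B$ and $J_C$, I would map each of the four sign sectors $(\pm J_B,\pm J_C)$ back onto the quadrant $J_B>0,\,J_C>0$ using the bias relations (\ref{ferro-bias})--(\ref{ferro-bias2}) — the evenness of $f$ in each of $J_B$ and $J_C$ disposing of the prefactors exactly as in Lemma~\ref{lm1} — to obtain
\be
&&\mathbb{E}\left[f(\{J_A\})X(\{J_A\}) \right]
\no\\
&=&\int_0^\infty\!\!\int_0^\infty\! dJ_B\, dJ_C\, P_B(J_B)P_C(J_C)\,
\no\\
&&\times\,\mathbb{E}\Big[f\Big(X(J_B,J_C) + X(-J_B,J_C)\,e^{-2\beta_{\text{NL},B}J_B}
\no\\
&&\qquad + X(J_B,-J_C)\,e^{-2\beta_{\text{NL},C}J_C} + X(-J_B,-J_C)\,e^{-2\beta_{\text{NL},B}J_B-2\beta_{\text{NL},C}J_C}\Big)\Big]' .
\ee

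The crux — and the only point where anything new over Lemma~\ref{lm1} is needed — is the lower bound on the four-term bracket. The stated hypotheses control the product $X(J_B,J_C)\,X(-J_B,-J_C)=1$, but \emph{a priori} say nothing about the mixed product $X(-J_B,J_C)\,X(J_B,-J_C)$. The resolution is to notice that the functional equation $X(-J_B,-J_C,\{J_A\}\setminus\{J_B,J_C\})=1/X(J_B,J_C,\{J_A\}\setminus\{J_B,J_C\})$ is an identity in the interaction variables, so substituting $J_B\mapsto -J_B$ in it yields the derived relation $X(J_B,-J_C)=1/X(-J_B,J_C)$, i.e. $X(-J_B,J_C)\,X(J_B,-J_C)=1$ as well. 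Armed with both relations, I would apply the inequality of arithmetic and geometric means to the two pairs $\{\,X(J_B,J_C),\ X(-J_B,-J_C)\,e^{-2\beta_{\text{NL},B}J_B-2\beta_{\text{NL},C}J_C}\,\}$ and $\{\,X(-J_B,J_C)\,e^{-2\beta_{\text{NL},B}J_B},\ X(J_B,-J_C)\,e^{-2\beta_{\text{NL},C}J_C}\,\}$ separately, exactly as the single AM--GM step in Lemma~\ref{lm1}: the first pair gives $\ge 2e^{-\beta_{\text{NL},B}J_B-\beta_{\text{NL},C}J_C}$ from the original hypothesis and the second gives the same from the derived relation, so the bracket is $\ge 4e^{-\beta_{\text{NL},B}J_B-\beta_{\text{NL},C}J_C}$. (Equivalently, a single four-term AM--GM works once the mixed product is known to equal $1$.)

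Finally I would insert this bound into the integral and run the quadrant decomposition in reverse: using (\ref{ferro-bias})--(\ref{ferro-bias2}) and the evenness of $f$ once more, $4\int_0^\infty\!\int_0^\infty dJ_B\,dJ_C\,P_B(J_B)P_C(J_C)\,\mathbb{E}[\,f\,e^{-\beta_{\text{NL},B}J_B-\beta_{\text{NL},C}J_C}\,]'$ reassembles into $\mathbb{E}\big[f(\{J_A\})\exp(-\beta_{\text{NL},B}J_B)\exp(-\beta_{\text{NL},C}J_C)\big]$, which is the asserted inequality. I expect the main obstacle to be precisely the observation that the ``antisymmetry under a simultaneous flip of $J_B$ and $J_C$'' hypothesis on $X$ silently also fixes $X(-J_B,J_C)\,X(J_B,-J_C)=1$; once that is in hand the argument is a two-variable copy of the proof of Lemma~\ref{lm1}, and the same splitting should extend verbatim to $n$ variables by pairing each sign sector with its global flip.
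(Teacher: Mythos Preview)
Your proof is correct and follows essentially the same route as the paper's: split the $(J_B,J_C)$ integration into the four sign sectors, apply AM--GM to the resulting four-term bracket, and reassemble. You are in fact more explicit than the paper about why the AM--GM step produces $4e^{-\beta_{\text{NL},B}J_B-\beta_{\text{NL},C}J_C}$ --- the paper invokes a single four-term AM--GM without spelling out that the hypothesis, being an identity in $(J_B,J_C)$, also forces $X(-J_B,J_C)\,X(J_B,-J_C)=1$, which is precisely the point you flag as the crux.
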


\begin{proof}
For simplicity, we denote $X(J_B,J_C, \{J_A \}\setminus \{J_B,J_C\})$ as $X(J_B,J_C)$.
By dividing the integration interval of $J_B$ and $J_C$ and summing up them, we find
\be
&&\mathbb{E}\left[ f(\{J_A\})X(\{J_A\}) \right]
\no\\
&=&\int_0^\infty dJ_BP_B(J_B)\int_0^\infty dJ_CP_C(J_C)  
\no\\
&&\mathbb{E}\left[ f(\{J_A\}) \left( X(J_B,J_C)+X(-J_B,-J_C)e^{-2\beta_{\text{NL}, B} J_B}e^{-2\beta_{\text{NL}, C} J_C} \right.\right.
\no\\
&&\left.\left. + X(-J_B,J_C)e^{-2\beta_{\text{NL}, B} J_B} +X(J_B,-J_C)e^{-2\beta_{\text{NL}, C} J_C}\right) \right]''
\no\\
&\ge&\int_0^\infty dJ_BP_B(J_B)\int_0^\infty dJ_CP_C(J_C)  \mathbb{E}\left[   4 f(\{J_A\})e^{-\beta_{\text{NL}, B} J_B}e^{-\beta_{\text{NL}, C} J_C} \right]''
\no\\
&=&\mathbb{E}\left[f(\{J_A\}) e^{-\beta_{\text{NL}, B} J_B} e^{-\beta_{\text{NL}, C} J_C} \right],
\ee
where $\mathbb{E}[\cdots]''$ denotes the configurational average over the randomness of the interactions other than $J_B$ and $J_C$, and we used the inequality of arithmetic and geometric means.
Therefore, we prove Lemma \ref{lm2}.

\end{proof}

Our third result is another extension of Lemma. \ref{lm1} to two-variable case.
\begin{lemma}\label{lm3}
We assume that $X(\{J_A\})$ and $Y(\{J_A\})$ satisfy
\be
X(\{J_A\})>0,
\\
Y(\{J_A\})>0,
\\
X(-J_B,J_C, \{J_A \}\setminus \{J_B,J_C\})&=&\frac{1}{X(J_B,J_C, \{J_A \}\setminus \{J_B,J_C\})} ,
\\
Y(J_B,-J_C, \{J_A \}\setminus \{J_B,J_C\})&=&\frac{1}{Y(J_B,J_C, \{J_A \}\setminus \{J_B,J_C\})} .
\ee
Then, for any function $f(\{J_A\})$ satisfying 
\be
f(\{J_A\})&\ge&0,
\\
f(J_B,J_C,\{J_A \}\setminus \{J_B,J_C\}))&=&f(-J_B,J_C,\{J_A \}\setminus \{J_B,J_C\}))
\no\\
&=&f(J_B,-J_C,\{J_A \}\setminus \{J_B,J_C\}))
\no\\
&=&f(-J_B,-J_C,\{J_A \}\setminus \{J_B,J_C\})),
\ee
the following inequality holds
\be
\mathbb{E}\left[f(\{J_A\})X(\{J_A\}) Y(\{J_A\}) \right] \ge \mathbb{E}\left[f(\{J_A\})   \exp\left(-\beta_{\text{NL}, B} J_B\right)\exp\left(-\beta_{\text{NL}, C} J_C\right)  \right].
\ee
\end{lemma}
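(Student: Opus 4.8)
The plan is to repeat, with one extra twist, the argument used for Lemma~\ref{lm2}: split each of the $J_B$- and $J_C$-integrations into the negative and positive half-lines, use the bias relations (\ref{ferro-bias}) and (\ref{ferro-bias2}) together with the evenness of $f$ to pull all four sign configurations of $(J_B,J_C)$ back onto the quadrant $J_B,J_C>0$, and then bound the resulting four-term integrand from below by the inequality of arithmetic and geometric means.

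Concretely, writing $X(J_B,J_C)$ and $Y(J_B,J_C)$ for the dependence on $J_B$ and $J_C$ with the remaining interactions suppressed, the splitting gives
\be
&&\mathbb{E}\left[ f(\{J_A\}) X(\{J_A\}) Y(\{J_A\}) \right]
\no\\
&=& \int_0^\infty dJ_B P_B(J_B) \int_0^\infty dJ_C P_C(J_C)\, \mathbb{E}\Big[ f(\{J_A\}) \Big( X(J_B,J_C) Y(J_B,J_C)
\no\\
&& + X(-J_B,J_C) Y(-J_B,J_C)\, e^{-2\beta_{\text{NL}, B} J_B} + X(J_B,-J_C) Y(J_B,-J_C)\, e^{-2\beta_{\text{NL}, C} J_C}
\no\\
&& + X(-J_B,-J_C) Y(-J_B,-J_C)\, e^{-2\beta_{\text{NL}, B} J_B} e^{-2\beta_{\text{NL}, C} J_C} \Big) \Big]'' .
\ee
The key point is that, although neither $X$ nor $Y$ by itself inverts under the simultaneous flip $(J_B,J_C)\to(-J_B,-J_C)$, the hypotheses $X(-J_B,J_C)=1/X(J_B,J_C)$ and $Y(J_B,-J_C)=1/Y(J_B,J_C)$ still force the product of the four $XY$-factors above to equal $1$: the $X$-factors multiply to $X(J_B,J_C)X(-J_B,J_C)X(J_B,-J_C)X(-J_B,-J_C)=1$ and similarly for the $Y$-factors. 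Hence the product of the four bracketed terms, including their exponential weights, equals $e^{-4\beta_{\text{NL}, B}J_B} e^{-4\beta_{\text{NL}, C} J_C}$, and the inequality of arithmetic and geometric means (applicable since $X,Y>0$, so all four terms are nonnegative) bounds the integrand below by $4 f(\{J_A\})\, e^{-\beta_{\text{NL}, B} J_B} e^{-\beta_{\text{NL}, C} J_C}$.

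Finally, applying the same half-line splitting to $\mathbb{E}\left[ f(\{J_A\})\, e^{-\beta_{\text{NL}, B} J_B} e^{-\beta_{\text{NL}, C} J_C} \right]$ shows that this quantity is exactly the integral over $J_B,J_C>0$ of that same lower bound, which yields the claimed inequality. I do not expect a serious obstacle here beyond the bookkeeping of the four sign configurations; the only thing that genuinely has to be checked — and which distinguishes the argument from Lemma~\ref{lm2} — is that the \emph{crossed} inversion relations (one for $J_B$ acting on $X$, one for $J_C$ acting on $Y$), rather than the diagonal one, still make the geometric mean collapse to $e^{-\beta_{\text{NL}, B} J_B} e^{-\beta_{\text{NL}, C} J_C}$, which is the cancellation sketched above.
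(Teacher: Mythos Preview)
Your argument is correct. The key cancellation you identify---that the product of the four $X$-factors and the product of the four $Y$-factors over the sign configurations $(\pm J_B,\pm J_C)$ both equal $1$---is exactly right, and a single four-term AM--GM then yields the bound.

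The paper organises the same ingredients slightly differently: instead of splitting both variables at once and applying AM--GM to four terms (as in its own proof of Lemma~\ref{lm2}), it proceeds sequentially, first splitting only $J_B$ and applying two-term AM--GM to kill the $X$-dependence (leaving a factor $\sqrt{Y(J_B,J_C)Y(-J_B,J_C)}$), and then splitting $J_C$ and applying two-term AM--GM a second time to collapse the $Y$-factors. Your one-step version is a bit more economical and makes the symmetry between $X$ and $Y$ transparent; the paper's iterated version has the minor advantage that each AM--GM step uses only the inversion relation for one of $X$ or $Y$ at a time, which mirrors Lemma~\ref{lm1} more directly. Mathematically the two arrive at the same bound by the same mechanism.
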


\begin{proof}
For simplicity, we denote $X(J_B,J_C, \{J_A \}\setminus \{J_B,J_C\})$ and $Y(J_B,J_C, \{J_A \}\setminus \{J_B,J_C\})$ as $X(J_B,J_C)$ and $Y(J_B,J_C)$.
Similarly to the derivation of Lemma \ref{lm2}, we find
\be
&&\mathbb{E}\left[ f(\{J_A\})X(\{J_A\})Y(\{J_A\}) \right]
\no\\
&=&\int_0^\infty dJ_BP_B(J_B)  \mathbb{E}\left[ f(\{J_A\})\left(X(J_B,J_C)Y(J_B,J_C) +X(-J_B,J_C)Y(-J_B,J_C)e^{-2\beta_{\text{NL}, B} J_B}  \right) \right]' 
\no\\
&\ge&\int_0^\infty dJ_BP_B(J_B) \mathbb{E}\left[   2 f(\{J_A\})\sqrt{X(J_B,J_C) X(-J_B,J_C)Y(J_B,J_C) Y(-J_B,J_C)e^{-2\beta_{\text{NL}, B} J_B}  }  \right]' 
\no\\
&=&\int_0^\infty dJ_BP_B(J_B) \mathbb{E}\left[   2 f(\{J_A\})e^{-\beta_{\text{NL}, B} J_B}\sqrt{Y(J_B,J_C) Y(-J_B,J_C)  }  \right]' 
\no\\
&=&\int_0^\infty dJ_BP_B(J_B)  \int_0^\infty dJ_CP_C(J_C)  \mathbb{E}\left[   2 f(\{J_A\})e^{-\beta_{\text{NL}, B} J_B} \sqrt{Y(J_B,J_C) Y(-J_B,J_C) } \right.
\no\\
&&\left. + 2 f(\{J_A\})e^{-\beta_{\text{NL}, B} J_B} e^{-2\beta_{\text{NL}, C} J_C}\sqrt{Y(J_B,-J_C) Y(-J_B,-J_C)}  \right]''
\no\\
&\ge&\int_0^\infty dJ_BP_B(J_B)  \int_0^\infty dJ_CP_C(J_C)
\no\\
&&  \mathbb{E}\left[   4 f(\{J_A\})e^{-\beta_{\text{NL}, B} J_B} \sqrt{\sqrt{Y(J_B,J_C) Y(-J_B,J_C)  }  e^{-2\beta_{\text{NL}, C} J_C}\sqrt{Y(J_B,-J_C) Y(-J_B,-J_C) }}  \right]''
\no\\
&=&\int_0^\infty dJ_BP_B(J_B)  \int_0^\infty dJ_CP_C(J_C) \mathbb{E}\left[   4 f(\{J_A\})e^{-\beta_{\text{NL}, B} J_B} e^{-\beta_{\text{NL}, C} J_C}   \right]''
\no\\
&=&\mathbb{E}\left[f(\{J_A\})   \exp\left(-\beta_{\text{NL}, B} J_B\right)\exp\left(-\beta_{\text{NL}, C} J_C\right)  \right] ,
\ee
where we used the inequality of arithmetic and geometric means twice.
Thus, we arrive at Lemma \ref{lm3}.
\end{proof}

\section{ Some correlation inequalities in spin-glass models}
We have proved three lemmas for expectations in Sec III. 
In this section, using the acquired inequalities, we obtain some correlation inequalities in spin-glass models which is an extension of Eq. (\ref{pre-kitatani}) to the case with multi variables.

\subsection{Inequalities for double interactions}
Although the previous studies~\cite{KNA,OO} has focused only on a single interaction $J_B$ among all of the interactions, Lemma. \ref{lm2} enable us to extend their results to double interactions $J_B$ and $J_C$ among all of the interactions.
\begin{theorem} \label{th1}
Under the conditions (\ref{ferro-bias}) and (\ref{ferro-bias2}), the following inequality holds:
\be
&&\mathbb{E}\left[   J_B J_C\left( \langle\sigma_B   \sigma_C \rangle_{\{J_{A}\}}  +\frac{1-\exp\left(-\beta_{\text{NL}, B} J_B\right)\exp \left(-\beta_{\text{NL}, C} J_C\right)  }{\sinh(2\lambda_B\beta J_{B}) \sinh(2\lambda_C\beta J_{C}) }  \right) \right]
\no\\
&\ge&\mathbb{E}\left[ J_B J_C \left( \tanh(\beta \lambda_B J_{B}) \tanh(\beta \lambda_C J_{C})  +    \frac{\langle\sigma_B \rangle_{\{J_{A}\}} -\tanh(\beta \lambda_B J_{B})}{\tanh(2\lambda_C\beta J_{C})} +     \frac{\langle\sigma_C \rangle_{\{J_{A}\}} -\tanh(\beta \lambda_C J_{C})}{\tanh(2\lambda_B\beta J_{B})} \right)\right] . \label{main}
\no\\
\ee
\end{theorem}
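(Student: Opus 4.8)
The plan is to carry out, for the pair $(J_B,J_C)$, the same argument that produced Eq.~(\ref{pre-kitatani}) in the single-interaction case, now with Lemma~\ref{lm2} playing the role of Lemma~\ref{lm1}. I abbreviate
\be
Z(J_B,J_C)&=&\Tr \exp\left( \beta \sum_{A \subset{V}\setminus\{B,C\}} \lambda_A  J_{A} \sigma_A +\beta \lambda_B J_B \sigma_B+\beta \lambda_C J_C \sigma_C \right),
\ee
so that $Z(J_B,J_C)=Z_{\{J_{A}\}}$ while changing the sign of $J_B$ and/or $J_C$ gives a partition function different from $Z_{\{J_{A}\}}$. The object I would feed into Lemma~\ref{lm2} is $X(\{J_A\})=Z(-J_B,-J_C)/Z(J_B,J_C)$. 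Rewriting this ratio as a thermal expectation and using $e^{-2\beta\lambda_B J_B\sigma_B}=\cosh(2\beta\lambda_B J_B)-\sigma_B\sinh(2\beta\lambda_B J_B)$ (valid since $\sigma_B^2=1$), and likewise for $\sigma_C$,
\be
X(\{J_A\})&=&\left\langle e^{-2\beta\lambda_B J_B\sigma_B-2\beta\lambda_C J_C\sigma_C}\right\rangle_{\{J_{A}\}}
\no\\
&=&\cosh(2\beta\lambda_B J_B)\cosh(2\beta\lambda_C J_C)-\cosh(2\beta\lambda_B J_B)\sinh(2\beta\lambda_C J_C)\langle\sigma_C\rangle_{\{J_{A}\}}
\no\\
&&{}-\sinh(2\beta\lambda_B J_B)\cosh(2\beta\lambda_C J_C)\langle\sigma_B\rangle_{\{J_{A}\}}+\sinh(2\beta\lambda_B J_B)\sinh(2\beta\lambda_C J_C)\langle\sigma_B\sigma_C\rangle_{\{J_{A}\}}.
\ee
Positivity of $X$ is immediate since partition functions are positive, and flipping the signs of \emph{both} $J_B$ and $J_C$ interchanges numerator and denominator, so $X(-J_B,-J_C,\{J_A\}\setminus\{J_B,J_C\})=1/X(J_B,J_C,\{J_A\}\setminus\{J_B,J_C\})$; hence the hypotheses of Lemma~\ref{lm2} are satisfied.

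Next I would apply Lemma~\ref{lm2} with $f(\{J_A\})=J_B J_C/[\sinh(2\beta\lambda_B J_B)\sinh(2\beta\lambda_C J_C)]$, which is nonnegative (with $\beta,\lambda_B,\lambda_C>0$ the factor $J_B/\sinh(2\beta\lambda_B J_B)$ is positive) and invariant under $J_B\to-J_B$ and under $J_C\to-J_C$ separately, as the lemma requires. Dividing the bracket on the left of the resulting inequality through by $\sinh(2\beta\lambda_B J_B)\sinh(2\beta\lambda_C J_C)$ gives
\be
&&\mathbb{E}\left[ J_B J_C\left( \langle\sigma_B \sigma_C\rangle_{\{J_{A}\}} +\coth(2\beta\lambda_B J_B)\coth(2\beta\lambda_C J_C)\right.\right.
\no\\
&&\qquad\qquad\left.\left.{}-\coth(2\beta\lambda_B J_B)\langle\sigma_C\rangle_{\{J_{A}\}}-\coth(2\beta\lambda_C J_C)\langle\sigma_B\rangle_{\{J_{A}\}}\right)\right]
\no\\
&\ge&\mathbb{E}\left[ \frac{J_B J_C\,\exp(-\beta_{\text{NL}, B} J_B)\exp(-\beta_{\text{NL}, C} J_C)}{\sinh(2\beta\lambda_B J_B)\sinh(2\beta\lambda_C J_C)} \right].
\ee

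Finally I would convert this into Eq.~(\ref{main}) by the elementary identity $\coth(2x)=\tanh(x)+1/\sinh(2x)$. Substituting $\coth(2\beta\lambda_B J_B)=\tanh(\beta\lambda_B J_B)+1/\sinh(2\beta\lambda_B J_B)$ and the same relation for $C$, expanding the products and regrouping, one checks that the displayed inequality is simply Eq.~(\ref{main}) rewritten: the pure $1/[\sinh(2\beta\lambda_B J_B)\sinh(2\beta\lambda_C J_C)]$ contributions assemble into the $(1-\exp(-\beta_{\text{NL},B} J_B)\exp(-\beta_{\text{NL},C} J_C))/[\sinh(2\beta\lambda_B J_B)\sinh(2\beta\lambda_C J_C)]$ term on the left of Eq.~(\ref{main}), while the remaining terms reorganize---using $1/\tanh(2x)=\tanh(x)+1/\sinh(2x)$ once more---into $\tanh(\beta\lambda_B J_B)\tanh(\beta\lambda_C J_C)+(\langle\sigma_B\rangle_{\{J_{A}\}}-\tanh(\beta\lambda_B J_B))/\tanh(2\lambda_C\beta J_C)+(\langle\sigma_C\rangle_{\{J_{A}\}}-\tanh(\beta\lambda_C J_C))/\tanh(2\lambda_B\beta J_B)$ on the right. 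This last stage is purely routine algebra; the only genuine inequality used anywhere is the single invocation of Lemma~\ref{lm2}, so I anticipate no conceptual obstacle. The points that warrant care are the verification of the reciprocal relation $X(-J_B,-J_C,\cdots)=1/X(J_B,J_C,\cdots)$ that makes Lemma~\ref{lm2} applicable, and keeping the hyperbolic-function identities and the various $\pm$ signs straight in the closing rearrangement.
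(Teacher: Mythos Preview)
Your proposal is correct and follows essentially the same route as the paper's own proof: set $X(\{J_A\})=Z(-J_B,-J_C)/Z(J_B,J_C)$ in Lemma~\ref{lm2}, take $f(\{J_A\})=J_BJ_C/[\sinh(2\beta\lambda_B J_B)\sinh(2\beta\lambda_C J_C)]$, and then rearrange with hyperbolic identities. The paper uses the equivalent identity $(1-\cosh(2x)\cosh(2y))/(\sinh(2x)\sinh(2y))=\tanh x\tanh y-\tanh x\coth(2y)-\tanh y\coth(2x)$ rather than your $\coth(2x)=\tanh x+1/\sinh(2x)$, but this is merely cosmetic.
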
 
\begin{proof}
For simplicity, we denote $Z(J_B,J_C)$ as 
\be
Z(J_B,J_C)&=&\Tr \exp\left( \beta \sum_{A \subset{V}\setminus \{B,C\}} \lambda_A  J_{A} \sigma_A +\beta \lambda_B J_B \sigma_B + \beta \lambda_C J_C \sigma_C \right).
\ee
Setting $X(\{J_A\})$ as $Z(-J_B,-J_C)/Z(J_B,J_C)$ in Lemma. \ref{lm2}, we obtain
\be
&&\mathbb{E}\left[  f(\{J_A\}) \left( \cosh(2\beta J_{B})\cosh(2\beta J_{C})  + \sinh(2\beta J_{B})\sinh(2\beta J_{C}) \langle\sigma_B   \sigma_C \rangle_{\{J_{A}\}} 
\right. \right.
\no\\
&& \left.\left. - \sinh(2\beta J_{B})\cosh(2\beta J_{C}) \langle\sigma_B   \rangle_{\{J_{A}\}} -\cosh(2\beta J_{B})\sinh(2\beta J_{C}) \langle  \sigma_C \rangle_{\{J_{A}\}} \right) \right]
\no\\
&\ge&\mathbb{E}\left[ f(\{J_A\}) e^{-\beta_{\text{NL}, B} J_B} e^{-\beta_{\text{NL}, C} J_C}   \right].
\ee
Then, substituting $J_BJ_C/\left(\sinh(2\lambda_B\beta J_{B}) \sinh(2\lambda_C\beta J_{C})\right)$ into $f(\{J_A\})$, we arrive at
\be
&&\mathbb{E}\left[   J_B J_C\left( \langle\sigma_B   \sigma_C \rangle_{\{J_{A}\}}  +\frac{1-e^{-\beta_{\text{NL}, B} J_B} e^{-\beta_{\text{NL}, C} J_C}  }{\sinh(2\lambda_B\beta J_{B}) \sinh(2\lambda_C\beta J_{C}) }  \right) \right]
\no\\
&\ge&\mathbb{E}\left[ J_B J_C \left(\frac{1-\cosh(2\beta J_{B})\cosh(2\beta J_{C})  }{\sinh(2\beta J_{B})\sinh(2\beta J_{C})  } +    \frac{\langle\sigma_B \rangle_{\{J_{A}\}} }{\tanh(2\lambda_C\beta J_{C})} +     \frac{\langle\sigma_C \rangle_{\{J_{A}\}}}{\tanh(2\lambda_B\beta J_{B})} \right)\right] 
\no\\
&=&\mathbb{E}\left[ J_B J_C \left( \tanh(\beta \lambda_B J_{B}) \tanh(\beta \lambda_C J_{C})  +    \frac{\langle\sigma_B \rangle_{\{J_{A}\}} -\tanh(\beta \lambda_B J_{B})}{\tanh(2\lambda_C\beta J_{C})} +     \frac{\langle\sigma_C \rangle_{\{J_{A}\}} -\tanh(\beta \lambda_C J_{C})}{\tanh(2\lambda_B\beta J_{B})} \right)\right] .
\no\\
\ee
Thus, we obtain Eq. (\ref{main}).
\end{proof}
We mention that we had found the following inequality for $\beta_{\text{NL}, B}=\beta_{\text{NL}, C} =0$ in Ref. \cite{OO2},
\be
\mathbb{E}\left[   J_B J_C \langle\sigma_B   \sigma_C \rangle_{\{J_{A}\}}  \right]\ge0. \label{pre-0}
\ee
Numerical calculation suggests that the right-hand side in Eq. (\ref{main}) does not have a definite sign for $\beta_{\text{NL}, B}=\beta_{\text{NL}, C} =0$.
Thus, it is considered that Eq. (\ref{main}) is independent of Eq. (\ref{pre-0}).

Interestingly, an inequality of the same form as in Eq. (\ref{main}) holds for $\mathbb{E}\left[J_B J_C \langle\sigma_B  \rangle_{\{J_{A}\}} \langle  \sigma_C \rangle_{\{J_{A}\}}\right]$ as well.

\begin{theorem} \label{th2}
Under the conditions (\ref{ferro-bias}) and (\ref{ferro-bias2}), the following inequality holds:
\be
&&\mathbb{E}\left[   J_B J_C\left( \langle\sigma_B  \rangle_{\{J_{A}\}} \langle  \sigma_C \rangle_{\{J_{A}\}} +\frac{1-\exp\left(-\beta_{\text{NL}, B} J_B\right)\exp \left(-\beta_{\text{NL}, C} J_C\right)  }{\sinh(2\lambda_B\beta J_{B}) \sinh(2\lambda_C\beta J_{C}) }  \right)  \right]
\no\\
&\ge&\mathbb{E}\left[ J_B J_C \left( \tanh(\beta \lambda_B J_{B}) \tanh(\beta \lambda_C J_{C})  +    \frac{\langle\sigma_B \rangle_{\{J_{A}\}} -\tanh(\beta \lambda_B J_{B})}{\tanh(2\lambda_C\beta J_{C})} +     \frac{\langle\sigma_C \rangle_{\{J_{A}\}} -\tanh(\beta \lambda_C J_{C})}{\tanh(2\lambda_B\beta J_{B})} \right)\right] . \label{main2}
\no\\
\ee
\end{theorem}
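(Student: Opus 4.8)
The plan is to mirror the proof of Theorem~\ref{th1}, but to invoke Lemma~\ref{lm3} in place of Lemma~\ref{lm2}. The key observation is that the factorized correlator $\langle\sigma_B\rangle_{\{J_{A}\}}\langle\sigma_C\rangle_{\{J_{A}\}}$ is naturally produced as a \emph{product} of two single-flip partition-function ratios, one carrying only a $J_B$-flip and one carrying only a $J_C$-flip, whereas $\langle\sigma_B\sigma_C\rangle_{\{J_{A}\}}$ in Theorem~\ref{th1} came from a single ratio with a joint $(J_B,J_C)$-flip. Concretely, with $Z(J_B,J_C)$ as in the proof of Theorem~\ref{th1}, I would put
\be
X(\{J_A\})=\frac{Z(-J_B,J_C)}{Z(J_B,J_C)},\qquad Y(\{J_A\})=\frac{Z(J_B,-J_C)}{Z(J_B,J_C)}
\ee
into Lemma~\ref{lm3}. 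Both are strictly positive, and the reciprocity relations $X(-J_B,J_C)=1/X(J_B,J_C)$ and $Y(J_B,-J_C)=1/Y(J_B,J_C)$ are immediate from the definitions, so the hypotheses of Lemma~\ref{lm3} on $X$ and $Y$ are met.

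Next I would evaluate $X$ and $Y$ using the spin-flip identity $e^{-\beta\lambda_B J_B\sigma_B}=e^{\beta\lambda_B J_B\sigma_B}\left(\cosh(2\beta\lambda_B J_B)-\sigma_B\sinh(2\beta\lambda_B J_B)\right)$ together with $\sigma_B^2=1$, which gives
\be
X(\{J_A\})=\cosh(2\beta\lambda_B J_B)-\langle\sigma_B\rangle_{\{J_{A}\}}\sinh(2\beta\lambda_B J_B),
\ee
and the same expression with $B\to C$ for $Y$. Multiplying, $X(\{J_A\})Y(\{J_A\})$ is exactly the polynomial in $\cosh$, $\sinh$, $\langle\sigma_B\rangle_{\{J_{A}\}}$ and $\langle\sigma_C\rangle_{\{J_{A}\}}$ occurring on the left-hand side of the first displayed inequality in the proof of Theorem~\ref{th1}, but with $\langle\sigma_B\sigma_C\rangle_{\{J_{A}\}}$ replaced by $\langle\sigma_B\rangle_{\{J_{A}\}}\langle\sigma_C\rangle_{\{J_{A}\}}$. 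Hence Lemma~\ref{lm3} yields the analogue of that inequality with the correlator factorized.

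Finally I would substitute $f(\{J_A\})=J_BJ_C/\left(\sinh(2\lambda_B\beta J_{B})\sinh(2\lambda_C\beta J_{C})\right)$, which is nonnegative since $J/\sinh(2\lambda\beta J)\ge 0$ and is separately even in $J_B$ and in $J_C$, so it is an admissible weight for Lemma~\ref{lm3}. After this the algebraic rearrangement is word-for-word the one carried out at the end of the proof of Theorem~\ref{th1} (in particular the identity $\frac{1-\cosh(2\beta\lambda_B J_B)\cosh(2\beta\lambda_C J_C)}{\sinh(2\beta\lambda_B J_B)\sinh(2\beta\lambda_C J_C)}=\tanh(\beta\lambda_B J_B)\tanh(\beta\lambda_C J_C)-\frac{\tanh(\beta\lambda_B J_B)}{\tanh(2\beta\lambda_C J_C)}-\frac{\tanh(\beta\lambda_C J_C)}{\tanh(2\beta\lambda_B J_B)}$), and it produces Eq.~(\ref{main2}).

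I do not expect a genuinely new obstacle here. The only points that require care are: (i) routing the argument through Lemma~\ref{lm3} (two independent single-variable flips) rather than Lemma~\ref{lm2}, since a single $X$ obeying a joint-flip reciprocity cannot generate the product $\langle\sigma_B\rangle_{\{J_{A}\}}\langle\sigma_C\rangle_{\{J_{A}\}}$; and (ii) checking, which is trivial, that the inserted weight $f$ satisfies the positivity and evenness conditions demanded by Lemma~\ref{lm3}. Everything else reduces verbatim to the computation already performed for Theorem~\ref{th1}.
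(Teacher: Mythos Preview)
Your proposal is correct and coincides with the paper's proof: the paper also applies Lemma~\ref{lm3} with $X=Z(-J_B)/Z(J_B)$, $Y=Z(-J_C)/Z(J_C)$ (which are exactly your $Z(-J_B,J_C)/Z(J_B,J_C)$ and $Z(J_B,-J_C)/Z(J_B,J_C)$ in the two-argument notation) and the same weight $f$, then reduces to the algebra of Theorem~\ref{th1}. The only difference is notational.
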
 
\begin{proof}
The proof is very similar to Theorem \ref{th1}.
We denote $Z(J_B)$ and $Z(J_C)$ as 
\be
Z(J_B)&=&\Tr \exp\left( \beta \sum_{A \subset{V}\setminus B} \lambda_A  J_{A} \sigma_A +\beta \lambda_B J_B \sigma_B \right),
\\
Z(J_C)&=&\Tr \exp\left( \beta \sum_{A \subset{V}\setminus C} \lambda_A  J_{A} \sigma_A +\beta \lambda_C J_C \sigma_C \right).
\ee
Putting $X(\{J_A\})$, $Y(\{J_A\})$ and $f(\{J_A\})$ as $Z(-J_B)/Z(J_B)$, $Z(-J_C)/Z(J_C)$ and $J_BJ_C/\left(\sinh(2\lambda_B\beta J_{B}) \sinh(2\lambda_C\beta J_{C})\right)$ in Lemma. \ref{lm3}, we obtain Eq. (\ref{main2}).
\end{proof}

\subsection{Upper bound on second derivative of quenched pressure}
We have extended the previous studies~\cite{KNA,OO} to two-variable cases.
However, at first glance, Eqs. (\ref{main}) and (\ref{main2}) take complex forms and it is not clear what these inequalities mean.
Here, we show that Eq. (\ref{main2}) enable us to obtain a simple upper bound on the second derivative of the quenched pressure with respect to the strength of the randomness. 

For $\beta_{\text{NL}, B}=\beta_{\text{NL}, C} =0$, we have already shown the following inequality~\cite{OO2},
\be
\mathbb{E}\left[   J_B J_C \tanh(\beta \lambda_B J_{B}) \tanh(\beta \lambda_C J_{C})    \right]&\ge&\mathbb{E}\left[   J_B J_C \langle\sigma_B   \sigma_C \rangle_{\{J_{A}\}}  \right] \label{pre-tanh}.
\ee
Then, combining Eqs. (\ref{main2}) and (\ref{pre-tanh}), we arrive at the following result.
\begin{corollary} 
For symmetric distribution $\beta_{\text{NL}, B}=\beta_{\text{NL}, C} =0$ and $\sigma_B\neq\sigma_C$, the second derivative of the quenched pressure with respect to the strength of the randomness has a non-trivial upper bound,
\be
\frac{1}{\beta^2}\frac{\partial^2}{\partial\lambda_B \partial\lambda_C }\mathbb{E}\left[  \log Z   \right] 
&=&\mathbb{E}\left[ J_BJ_C (\langle\sigma_B \sigma_C \rangle_{\{J_{A}\}} -\langle\sigma_B  \rangle_{\{J_{A}\}} \langle \sigma_C \rangle_{\{J_{A}\}} ) \right]
\no\\
&\le&\mathbb{E}\left[   J_B J_C \left( \frac{ \tanh(\beta \lambda_BJ_{B}) -\langle\sigma_B \rangle_{\{J_{A}\}}}{\tanh(2\beta \lambda_CJ_{C})} + \frac{\tanh(\beta \lambda_CJ_{C})-\langle\sigma_C \rangle_{\{J_{A}\}}}{\tanh(2\beta \lambda_BJ_{B}) }   \right) \right]\label{main3}.
\no\\
\ee
\end{corollary}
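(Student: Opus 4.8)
The plan is to derive the Corollary by combining the already-established Theorem~\ref{th2} (which is Eq.~(\ref{main2})) with the earlier inequality Eq.~(\ref{pre-tanh}) from Ref.~\cite{OO2}, specialized to the symmetric case $\beta_{\text{NL},B}=\beta_{\text{NL},C}=0$. First I would set $\beta_{\text{NL},B}=\beta_{\text{NL},C}=0$ in Eq.~(\ref{main2}). On the left-hand side the factor $1-\exp(-\beta_{\text{NL},B}J_B)\exp(-\beta_{\text{NL},C}J_C)$ collapses to $1-1=0$, so the correction term involving $1/(\sinh\sinh)$ vanishes identically, and Eq.~(\ref{main2}) reduces to
\be
\mathbb{E}\left[ J_B J_C \langle\sigma_B\rangle_{\{J_A\}}\langle\sigma_C\rangle_{\{J_A\}} \right]
\ge \mathbb{E}\left[ J_B J_C\left( \tanh(\beta\lambda_B J_B)\tanh(\beta\lambda_C J_C) + \frac{\langle\sigma_B\rangle_{\{J_A\}}-\tanh(\beta\lambda_B J_B)}{\tanh(2\lambda_C\beta J_C)} + \frac{\langle\sigma_C\rangle_{\{J_A\}}-\tanh(\beta\lambda_C J_C)}{\tanh(2\lambda_B\beta J_B)}\right)\right].
\ee
This is the first of the two ingredients.

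Next I would recall the first line of the Corollary: the mixed second derivative of the quenched pressure is, by Eq.~(\ref{second-deri}), equal to $\mathbb{E}[J_BJ_C(\langle\sigma_B\sigma_C\rangle - \langle\sigma_B\rangle\langle\sigma_C\rangle)]$. I would rewrite this as the difference $\mathbb{E}[J_BJ_C\langle\sigma_B\sigma_C\rangle] - \mathbb{E}[J_BJ_C\langle\sigma_B\rangle\langle\sigma_C\rangle]$ and bound the two pieces separately. For the first piece, Eq.~(\ref{pre-tanh}) gives $\mathbb{E}[J_BJ_C\langle\sigma_B\sigma_C\rangle] \le \mathbb{E}[J_BJ_C\tanh(\beta\lambda_B J_B)\tanh(\beta\lambda_C J_C)]$. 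For the second piece, the specialized form of Eq.~(\ref{main2}) above, after moving terms, gives a \emph{lower} bound on $\mathbb{E}[J_BJ_C\langle\sigma_B\rangle\langle\sigma_C\rangle]$, hence an \emph{upper} bound on $-\mathbb{E}[J_BJ_C\langle\sigma_B\rangle\langle\sigma_C\rangle]$, namely $-\mathbb{E}[J_BJ_C\tanh\tanh] + \mathbb{E}[J_BJ_C\,(\tanh(\beta\lambda_B J_B)-\langle\sigma_B\rangle)/\tanh(2\lambda_C\beta J_C) + (\tanh(\beta\lambda_C J_C)-\langle\sigma_C\rangle)/\tanh(2\lambda_B\beta J_B)]$. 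Adding the two bounds, the $\mathbb{E}[J_BJ_C\tanh\tanh]$ terms cancel exactly, leaving precisely the right-hand side of Eq.~(\ref{main3}).

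The main thing to be careful about is not an obstacle so much as a bookkeeping point: one must check that the sign flips line up — Eq.~(\ref{pre-tanh}) bounds $\langle\sigma_B\sigma_C\rangle$ from above while Eq.~(\ref{main2}) bounds $\langle\sigma_B\rangle\langle\sigma_C\rangle$ from below, and the second enters the derivative with a minus sign, so both contributions push in the same direction and the estimate is consistent. A secondary point is the hypothesis $\sigma_B\neq\sigma_C$, which is inherited from the validity of Eq.~(\ref{pre-tanh}) in Ref.~\cite{OO2} (for $\sigma_B=\sigma_C$ one has $\langle\sigma_B\sigma_C\rangle=1$ and the statement degenerates), so I would simply invoke that hypothesis rather than re-prove Eq.~(\ref{pre-tanh}). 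No further calculation is needed; the Corollary is just the sum of two previously established inequalities with a clean cancellation.
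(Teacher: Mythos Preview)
Your proposal is correct and matches the paper's own argument exactly: the paper simply states that the Corollary follows by combining Eq.~(\ref{main2}) (specialized to $\beta_{\text{NL},B}=\beta_{\text{NL},C}=0$, which kills the $1/(\sinh\sinh)$ term) with Eq.~(\ref{pre-tanh}), and you have spelled out precisely that combination, including the cancellation of the $\mathbb{E}[J_BJ_C\tanh\tanh]$ terms.
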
 
\begin{remark} 
\textup{
We note that, from Eq. (\ref{pre-kitatani}), the right-hand side in Eq. (\ref{main3}) is always positive.
Thus, this bound can be considered as a weak result of the counterpart of the Griffiths second inequality for general symmetric distributions. }
\end{remark} 
In order to investigate the tightness of Eq. (\ref{main3}), for example, we consider a closed chain of six spins with one added interaction between $2$ and $5$,
\be
H&=&-\lambda_{1,2} J_{1,2} \sigma_1\sigma_2-\lambda_{2,3}J_{2,3} \sigma_2\sigma_3-\lambda_{3,4}J_{3,4} \sigma_3\sigma_4-\lambda_{4,5}J_{4,5} \sigma_4\sigma_5 
\no\\
&&-\lambda_{5,6}J_{5,6} \sigma_5\sigma_6-\lambda_{6,1}J_{6,1} \sigma_6\sigma_1-\lambda_{2,5}J_{2,5} \sigma_2\sigma_5, \label{example-model}
\ee
where all the interactions follows independently a symmetric binary distribution
\be
P_{}(J_{})=\frac{1}{2}\delta(J-1)+\frac{1}{2}\delta(J+1),
\ee
and $\lambda_{1,2}=\lambda_{2,3}=\lambda_{3,4}=\lambda_{4,5}=\lambda_{5,6}=\lambda_{2,5}=\lambda$.
This model had been investigated in the previous study~\cite{CUV} and, for $B=\{1,2\}$, $C=\{2,3\}$ and $\beta=1$, it was shown that the second derivative of the quenched pressure  takes a negative value for $\lambda<0.695$ and a positive value for $\lambda>0.695$. 
For this model (\ref{example-model}), we numerically calculate both sides in Eq. (\ref{main3}) for $B=\{1,2\}$, $C=\{2,3\}$ and $\beta=1$ from $\lambda=0.001$ to $\lambda=3$ in Fig. \ref{fig1}.
The numerical calculation shows that the acquired inequality is strict in the regions with small randomness ($\lambda<<1$) but give weak evaluation in the regions with strong randomness.
\begin{figure}[h]
\includegraphics{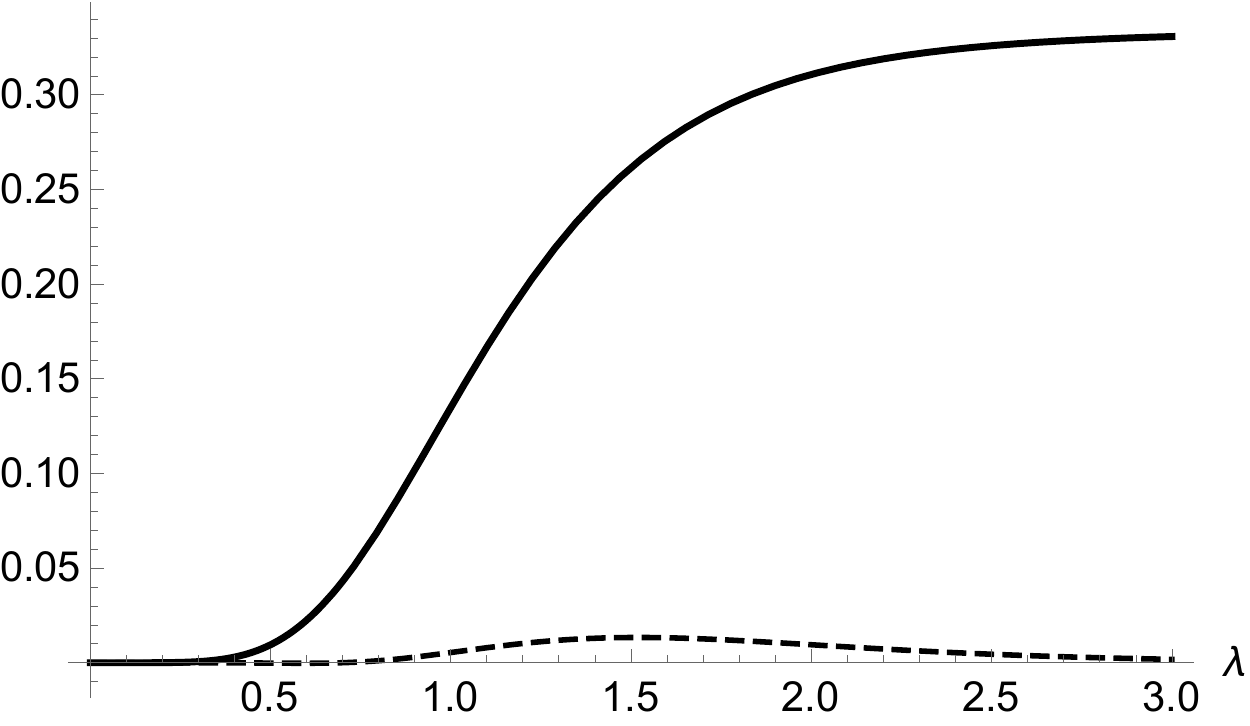}
\caption{Randomness dependence of both sides in Eq. (\ref{main3}) for $B=\{1,2\}$, $C=\{2,3\}$ and $\beta=1$ for the closed chain of six spins with one added interaction (\ref{example-model}).
The horizontal axis denotes the strength of the randomness, $\lambda$.
The dotted line and the thick line show the left-hand  side and the right-hand side in Eq. (\ref{main3}), respectively.}
\label{fig1}
\end{figure}

\section{ Conclusions}
We have obtained several correlation inequalities for the Ising models with quenched randomness.
Our main inequalities (\ref{main}) and (\ref{main2}) are extensions of previous studies~\cite{KNA,OO} (\ref{pre-kitatani}) to two-variable case.
Our method can be easily extended to the case with more than three variables.
Then, it is possible to obtain an infinite number of correlation inequalities in principle and the problem is how to find a meaningful one.

Furthermore, using the obtained inequalities for general symmetric distributions, we have given the positive upper bound (\ref{main3}) on the second derivative of the quenched pressure with respect to the strength of the randomness (\ref{second-deri}).
Numerical calculation shows that our bound (\ref{main3}) is strict in the regions with small randomness.
Equation (\ref{second-deri}) does not always take a negative value~\cite{CUV} and, thus, the counterpart of the Griffiths second inequality has not been established in spin glass models.
Our bound (\ref{main3}) is slight progress on this issue and can be regarded as a weak result of the counterpart of the Griffiths second inequality in spin glass models for general symmetric distributions, which is the first non-trivial upper bound on Eq. (\ref{second-deri}).

It is an important problem to improve our bound (\ref{main3}) to a tighter one.
One direction for future research is to consider model-dependent properties such as the shape of the lattice and the interaction.
Besides, it may also be useful to consider the effects of other interactions.
We have only focused on two interactions $J_B$ and $J_C$, and have not imposed any constraint on all the other interactions.
Incorporating information other than the two interactions may make our bound (\ref{main3}) tighter.

\begin{acknowledgment}
The present work was financially supported by JSPS KAKENHI Grant No. 18H03303, 19H01095, 19K23418, and the JST-CREST (No.JPMJCR1402) for Japan Science and Technology Agency.
\end{acknowledgment}


\end{document}